\documentclass{iopjournal}
\usepackage{amsmath,amssymb,bm,booktabs}
\usepackage{hyperref}
\usepackage{mathrsfs}
\newcommand{\ii}{\mathrm i}
\newcommand{\dd}{\mathrm d}
\newcommand{\Tr}{\operatorname{Tr}}
\newcommand{\diag}{\operatorname{diag}}
\newcommand{\Rea}{\operatorname{Re}}
\newcommand{\Ima}{\operatorname{Im}}
\newcommand{\cL}{\mathcal L}
\newcommand{\cD}{\mathcal D}
\newcommand{\cP}{\mathcal P}
\newcommand{\cQ}{\mathcal Q}
\newcommand{\epsb}{\varepsilon_b}
\newcommand{\gm}{\gamma_m}
\newcommand{\gc}{\gamma_c}
\newtheorem{proposition}{Proposition}
\makeatletter
\@ifundefined{proof}{\newenvironment{proof}{\par\noindent\textit{Proof.} }{\hfill$\square$\par}}{}
\makeatother
\allowdisplaybreaks
\begin{document}
\articletype{}
\title{Orbital-coherence memory in electrical transport and its self-consistent Maxwell response}
\author{Pengfei Zhu}

\affil{Bundesanstalt f\"ur Materialforschung und -pr\"ufung (BAM), 12205 Berlin,Germany}

\email{pengfei.zhu@bam.de}

\keywords{orbital coherence, electrical conductivity, memory function, charge conservation, Maxwell equations}
\begin{abstract}
An internal electronic coherence changes electrical transport only when it feeds back on the physical current. We develop a reduced description of this feedback and distinguish its exact consequences from the assumptions required for a microscopic realization. An opposite-parity two-orbital Hamiltonian identifies a current-coupled orbital force. Under an explicitly stated moment closure and collision ansatz, it yields a passive three-state response with an exponentially damped oscillatory memory. The resulting conductivity redistributes a fixed optical weight and predicts a nonmonotonic dependence of dc conduction on coherence damping at fixed remaining parameters. The orbital reduction constrains the coupling and internal frequency through a shared bare splitting, providing a test beyond unconstrained spectral fitting. A charge-conserving spatial extension couples to Maxwell equations through a positive energy balance and separates the constitutive response from longitudinal collective and transverse electromagnetic modes. We formulate complex-conductivity and transmission tests, including the limits of parameter identification over a finite bandwidth. The model supplies a conditional connection between orbital dynamics, transport memory, and electromagnetic observables; conductivity alone does not establish a uniquely quantum mechanism, and a material-specific collision calculation and closure-error assessment remain necessary.
\end{abstract}

\section{Introduction}
The Drude law reduces electrical transport to one relaxing current moment. An additional coherence lifetime does not change that law if the internal state is dynamically decoupled from the current. The physical question is therefore which microscopic operator exerts a coherent force on the current, and how that force changes a measured electromagnetic response.

Linear response, projection-generated memory, and conserving relaxation approximations are established frameworks \cite{kubo1957,mori1965,mermin1970}. We use these tools to examine a particular current--orbital-coherence reduction. The contribution developed here is the connection between an identifiable orbital force, a constrained memory conductivity, and its charge-conserving Maxwell embedding. Neither the use of a density matrix nor the presence of damped complex eigenvalues constitutes a new mechanism by itself.

The analysis separates three statements. Commutators and the physical current are fixed by a specified two-orbital Hamiltonian. A finite moment and collision closure is then imposed, rather than inferred from those commutators. Within that closure, the memory, energy balance, optical-weight identity, and electromagnetic dispersion follow exactly. This separation is essential when using a fitted conductivity to infer electronic coherence.

We use $e^{\ii\bm k\cdot\bm r-\ii\omega t}$ and $s=-\ii\omega$, with Laplace response initially defined for $\Rea s>0$. The carrier charge is $-e$, $e>0$. Fields are classical mean fields and the matter response is linearized about a homogeneous stationary equilibrium. Positive constants $\epsb$ and $\mu_b$ describe background response within the selected frequency window. They exclude the carrier sector treated explicitly below.

\begin{figure}[htbp]
	\centering
	\includegraphics[width=1.0\textwidth]{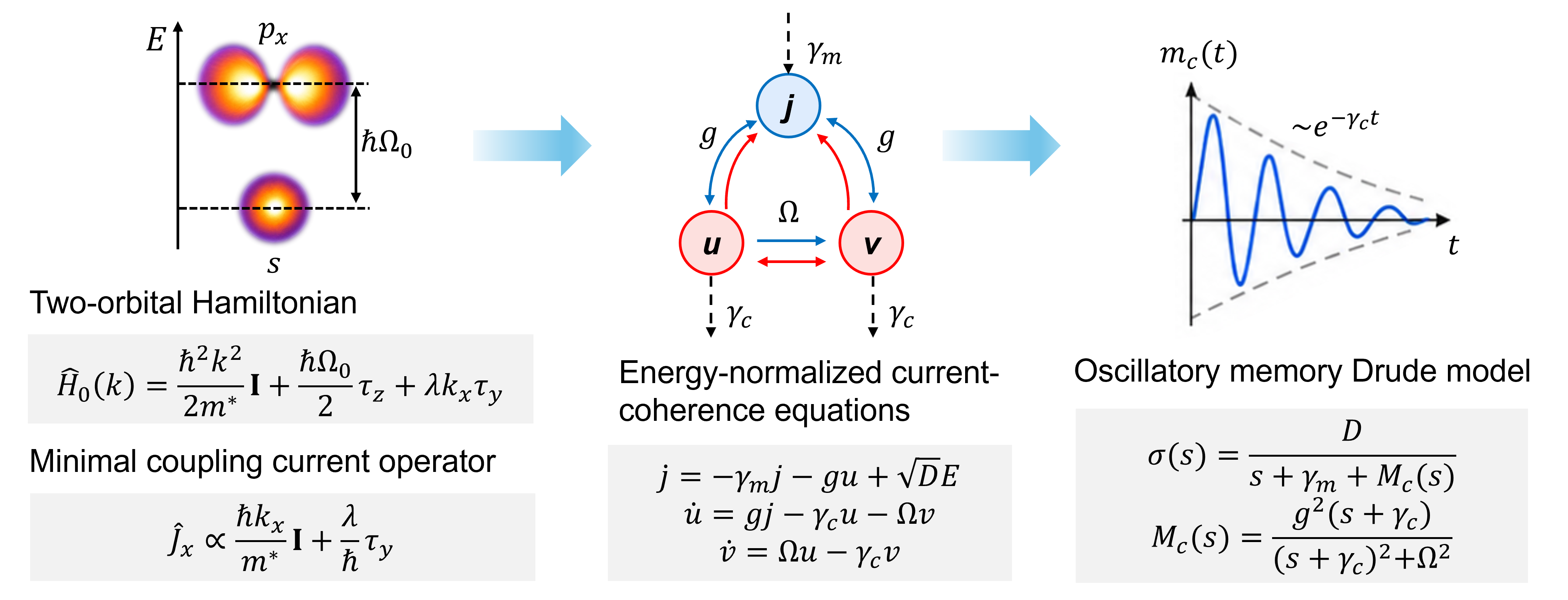}
	\caption{...}
	\label{fig1}
\end{figure}

\section{A microscopic orbital force and the reduced transport model}
\subsection{Hamiltonian, current, and exact moment identities}
A minimal electronic mechanism is a conducting two-orbital system with opposite-parity orbitals, for example an $s$--$p_x$ doublet in one crystal symmetry channel. In an orbital basis with inversion $\mathscr P=\tau_z$ and spinless time reversal $\mathscr T=\mathscr K$, consider
\begin{equation}\label{eq:orbital_H}
 \hat H_0=\sum_{\bm k}\hat c_{\bm k}^{\dagger}
 \left[
 \frac{\hbar^2k^2}{2m^*}\mathbf{I}+
 \frac{\hbar\Omega_0}{2}\tau_z+\lambda k_x\tau_y
 \right]\hat c_{\bm k}.
\end{equation}
The odd-in-$k_x$ hybridization is both inversion and time-reversal invariant because $\tau_y$ is odd under each of $\mathscr P$ and $\mathscr K$. Minimal substitution $k_x\mapsto k_x+eA_x/\hbar$ gives the physical current, at $A_x=0$,
\begin{equation}\label{eq:orbital_current}
 \hat J_x=-\frac{e}{\mathcal V}\sum_{\bm k}\hat c_{\bm k}^{\dagger}
 \left(\frac{\hbar k_x}{m^*}\mathbf{I}+
 \frac{\lambda}{\hbar}\tau_y\right)\hat c_{\bm k}.
\end{equation}
The second term is essential: it is the interorbital contribution to the physical current, not an optional additive Lorentz oscillator.

This is a \emph{doped, partially occupied} carrier-sector construction. A completely filled isolated band sector has no low-frequency Drude weight; interband absorption further requires occupied-to-empty transitions. Such an insulating sector does not establish the proposed transport mechanism without mobile carriers.

Define the homogeneous moments
\begin{equation}\label{eq:orbital_moments}
 K=\frac1{\mathcal V}\sum_{\bm k}\langle\hat c_{\bm k}^{\dagger}k_x\hat c_{\bm k}\rangle,\quad
 X=\frac1{\mathcal V}\sum_{\bm k}\langle\hat c_{\bm k}^{\dagger}\tau_x\hat c_{\bm k}\rangle,\quad
 Y=\frac1{\mathcal V}\sum_{\bm k}\langle\hat c_{\bm k}^{\dagger}\tau_y\hat c_{\bm k}\rangle,\quad
 Q=\frac1{\mathcal V}\sum_{\bm k}\langle\hat c_{\bm k}^{\dagger}k_x\tau_z\hat c_{\bm k}\rangle .
\end{equation}
In this fixed parity-orbital gauge, $X=2\mathcal V^{-1}\sum_{\bm k}\Re\rho_{sp}(\bm k)$ and $Y=-2\mathcal V^{-1}\sum_{\bm k}\Im\rho_{sp}(\bm k)$, up to the sign convention for $\tau_y$. They are therefore the real and imaginary quadratures of an interorbital electronic density-matrix coherence. The reversible equations following from \eqref{eq:orbital_H} are
\begin{equation}\label{eq:orbital_bloch}
 \dot X=\frac{2\lambda}{\hbar}Q-\Omega_0Y,\qquad
 \dot Y=\Omega_0X,\qquad
 \dot J_x=\frac{ne^2}{m^*}E_x-\frac{e\lambda\Omega_0}{\hbar}X .
\end{equation}
The last equality follows from differentiating the \emph{total} current \eqref{eq:orbital_current}; it is precisely the $\tau_y$ current vertex together with $\dot Y=\Omega_0X$ that provides the reciprocal current--coherence feedback.

The identities above use kinetic wave vector in a uniform electric field and vanishing momentum-space boundary terms. They identify a feedback channel but do not close the hierarchy: $Q$ is an independent moment, and its evolution generally generates further moments. The quantities $X,Y$ denote deviations from equilibrium. A fixed parity-orbital basis makes their interpretation explicit; in a general band gauge, coherence must instead be described through the appropriate band projectors.

\begin{figure}[htbp]
	\centering
	\includegraphics[width=0.8\textwidth]{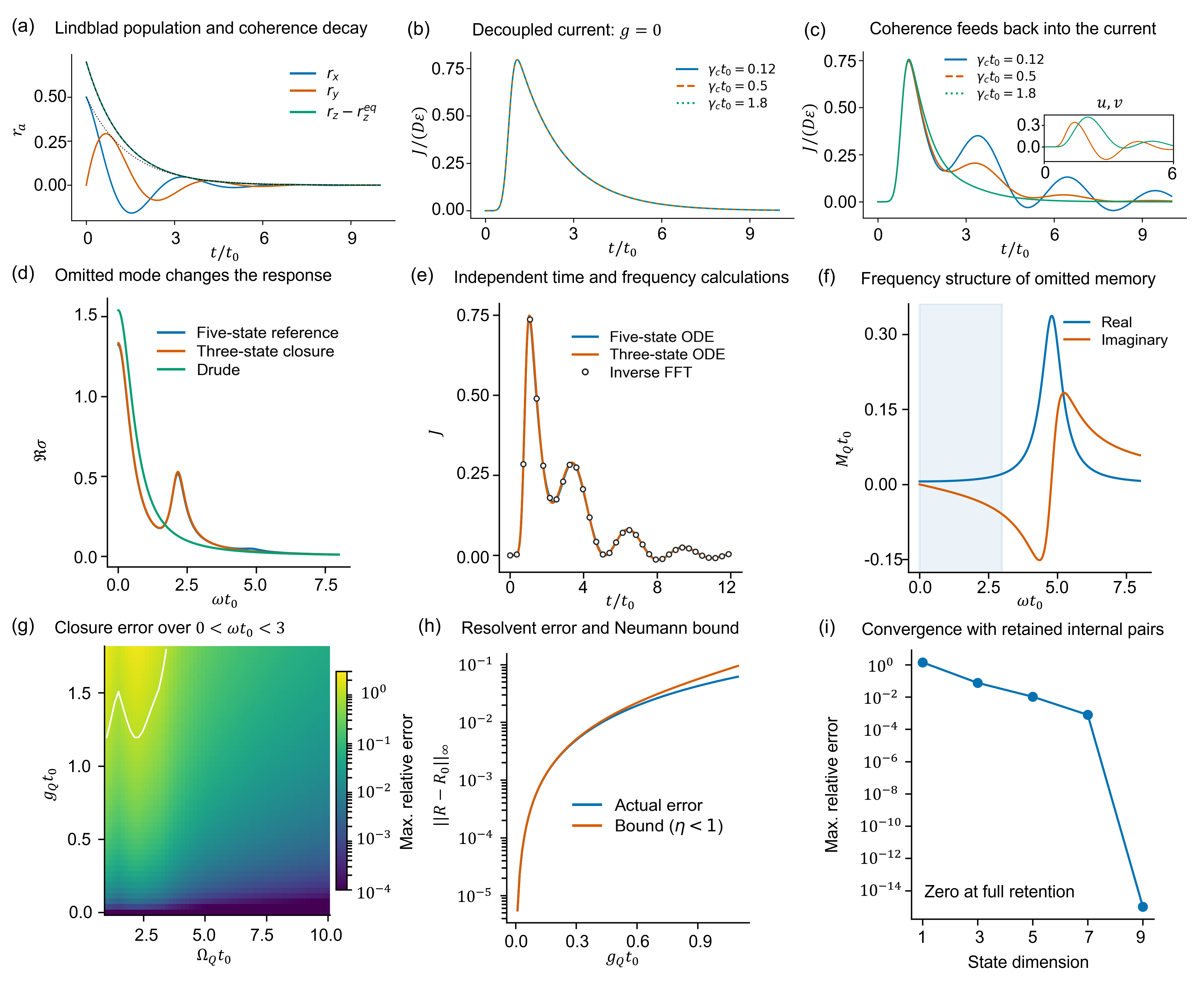}
	\caption{...}
	\label{fig2}
\end{figure}

\subsection{Conditional moment closure and parameter constraint}
Set $Q=\zeta K+Q_\perp$, where $\zeta$ is evaluated for a specified weak-drift perturbation as in Eq.~\eqref{eq:zeta}. A drift derivative is not automatically the equilibrium-metric projection coefficient. Dropping $Q_\perp$ is therefore an additional dynamical approximation, to be tested on the response window. Also assume that the collision operator acts diagonally on the retained \emph{total current} and the two quadratures, as in Eq.~\eqref{eq:orbital_rates}.

For a weak drift perturbation, introduce the response-weighted equilibrium orbital polarization
\begin{equation}\label{eq:zeta}
 \zeta=
 \left.
 \frac{\partial_{\delta k}\langle Q\rangle_{\delta k}}
 {\partial_{\delta k}\langle K\rangle_{\delta k}}
 \right|_{\delta k=0},
 \qquad Q=\zeta K+Q_{\perp}.
\end{equation}
Here $\langle\cdot\rangle_{\delta k}$ denotes the equilibrium distribution shifted infinitesimally in the $x$ direction. We restrict attention to drift states with $\zeta<0$, as expected when the lower $\tau_z=-1$ orbital dominates the participating carrier response; the sign must be evaluated for the specified distribution. Assume, for the moment, that the $Q_{\perp}$ memory is small on the selected response window and that the projected collision operator obeys
\begin{equation}\label{eq:orbital_rates}
 \left.\dot J_x\right|_{\mathrm{coll}}=-\gm J_x,\qquad
 \left.\dot X\right|_{\mathrm{coll}}=-\gc X,\qquad
 \left.\dot Y\right|_{\mathrm{coll}}=-\gc Y .
\end{equation}
Using \eqref{eq:orbital_current} to eliminate $K$, one then obtains
\begin{align}\label{eq:orbital_raw}
 \dot J_x&=-\gm J_x-aX+DE_x,\\
 \dot X&=bJ_x-\gc X-\Omega_{\mathrm{eff}}Y,\nonumber\\
 \dot Y&=\Omega_0X-\gc Y,\nonumber
\end{align}
where
\begin{equation}\label{eq:orbital_map}
 D=\frac{ne^2}{m^*},\qquad
 a=\frac{e\lambda\Omega_0}{\hbar},\qquad
 b=-\frac{2m^*\lambda\zeta}{e\hbar^2},\qquad
 \Omega_{\mathrm{eff}}=\Omega_0+\frac{2m^*\lambda^2\zeta}{\hbar^3}.
\end{equation}
Provided $ab>0$ and $\Omega_0\Omega_{\mathrm{eff}}>0$, a positive algebraic rescaling of $J_x,X,Y$ brings \eqref{eq:orbital_raw} exactly to the normal form \eqref{eq:min_j}--\eqref{eq:min_v}, with
\begin{equation}\label{eq:orbital_reduced_map}
 \begin{aligned}
 g^2&=ab=-\frac{2m^*\lambda^2\Omega_0\zeta}{\hbar^3},
 &\Omega&=\sqrt{\Omega_0\Omega_{\mathrm{eff}}},&
 \gm&=\tau_{\mathrm{tr}}^{-1},\\
 \gc&=\frac{\Gamma_{s\to p}+\Gamma_{p\to s}}{2}
       +\gamma_{\phi}+\gamma_{k\mathrm{-mix}} .
 \end{aligned}
\end{equation}
The last relation is a possible weak-coupling, secular parametrization of interorbital dephasing: population transfer, pure dephasing, and momentum-changing orbital dephasing all contribute. Beyond that regime, $\gm,\gc$ are entries of a frequency-dependent collision matrix rather than constants. Under these closure assumptions, Eq.~\eqref{eq:orbital_reduced_map} gives a conditional parameter map: $u$ is a rescaled real $s$--$p_x$ coherence, $v$ its imaginary quadrature, $\hbar\Omega_0$ is their orbital splitting, and $g$ is fixed by odd-parity hybridization and the equilibrium orbital polarization.

\begin{figure}[htbp]
	\centering
	\includegraphics[width=1.0\textwidth]{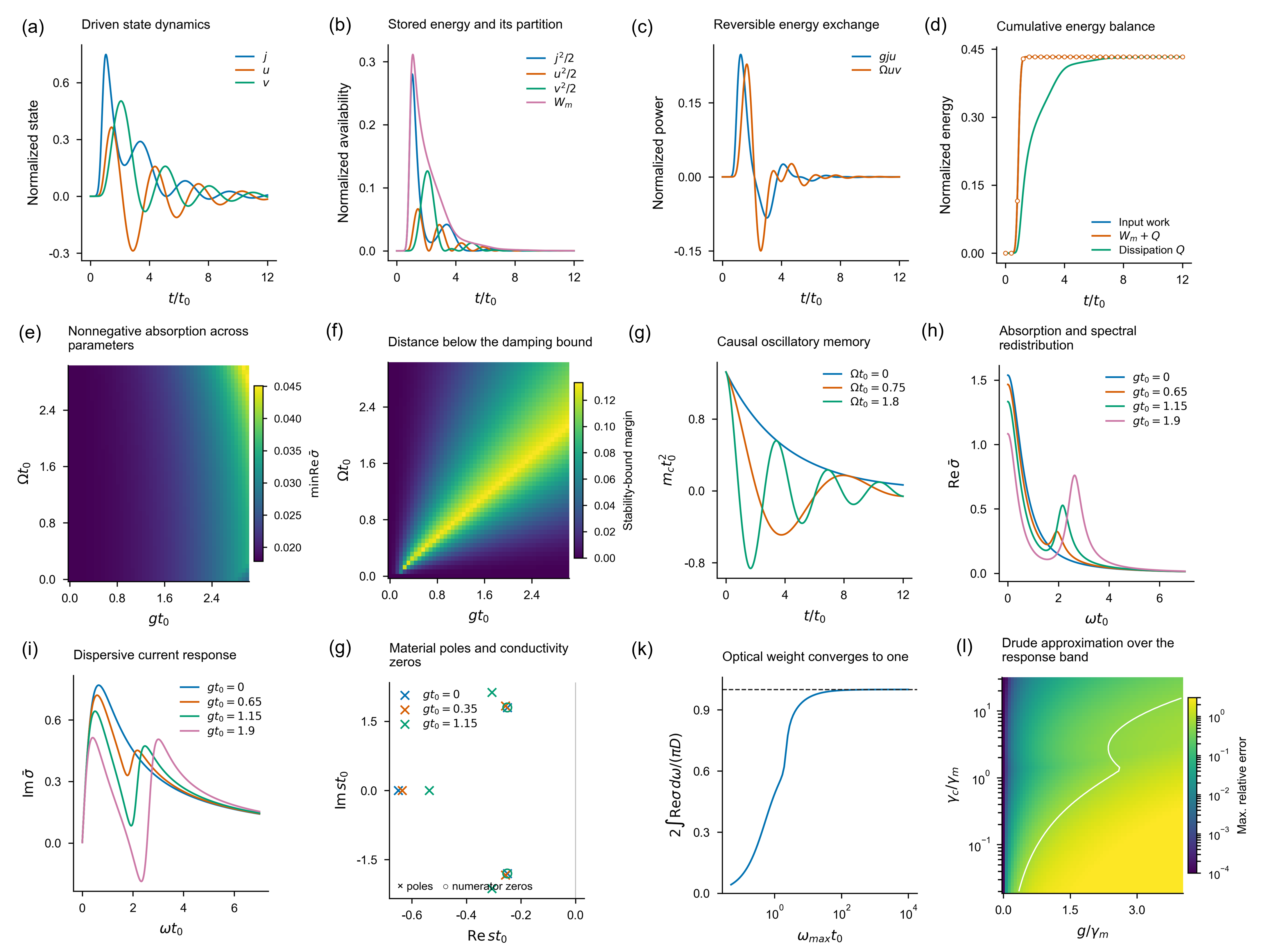}
	\caption{...}
	\label{fig3}
\end{figure}

The algebraic rescaling is explicit. With $j=J_x/\sqrt D$, choose
\begin{equation}\label{eq:scales}
 u=\frac{aX}{g\sqrt D},\qquad
 v=\frac{a\Omega_{\mathrm{eff}}Y}{g\Omega\sqrt D}.
\end{equation}
For nonzero $g,\Omega$ and positive $ab,\Omega_0\Omega_{\mathrm{eff}}$, these coordinates yield the normal form below. This proves the existence of a positive quadratic storage function for the closed equations. Identifying it with microscopic equilibrium availability additionally requires the susceptibility metric; that identification has not been derived from the Hamiltonian alone.

The same map implies a constraint absent from a generic three-pole fit:
\begin{equation}\label{eq:orbital_constraint}
 g^2+\Omega^2=\Omega_0^2,\qquad 0<g^2<\Omega_0^2.
\end{equation}
Here $\Omega_0>0$ and a noninverted, negative $\zeta$ have been assumed. Thus $g$ and $\Omega$ cannot be varied independently at fixed bare splitting within this closure. A fitted parameter set inconsistent with an independently determined $\Omega_0$ falsifies this particular reduction, even if it fits the optical data.

The damping assumption is equally restrictive. Write the physical current as $J_x=-\alpha K-\beta Y$, with $\alpha=e\hbar/m^*$ and $\beta=e\lambda/\hbar$. A bath satisfying $\dot K|_{\rm coll}=-\gm K$ and $\dot Y|_{\rm coll}=-\gc Y$ instead gives
\begin{equation}\label{eq:collision_cross}
 \dot J_x|_{\rm coll}=-\gm J_x+\beta(\gc-\gm)Y.
\end{equation}
The additional term changes the reduced response. A two-level dephasing formula therefore does not establish the assumed diagonal current collision operator. The rates in Eq.~\eqref{eq:orbital_reduced_map} are a conditional parametrization, not a reservoir-derived result for this conductor. Appendix~\ref{app:projection} states how omitted dynamics and vertex corrections enter.

\begin{figure}[htbp]
	\centering
	\includegraphics[width=1.0\textwidth]{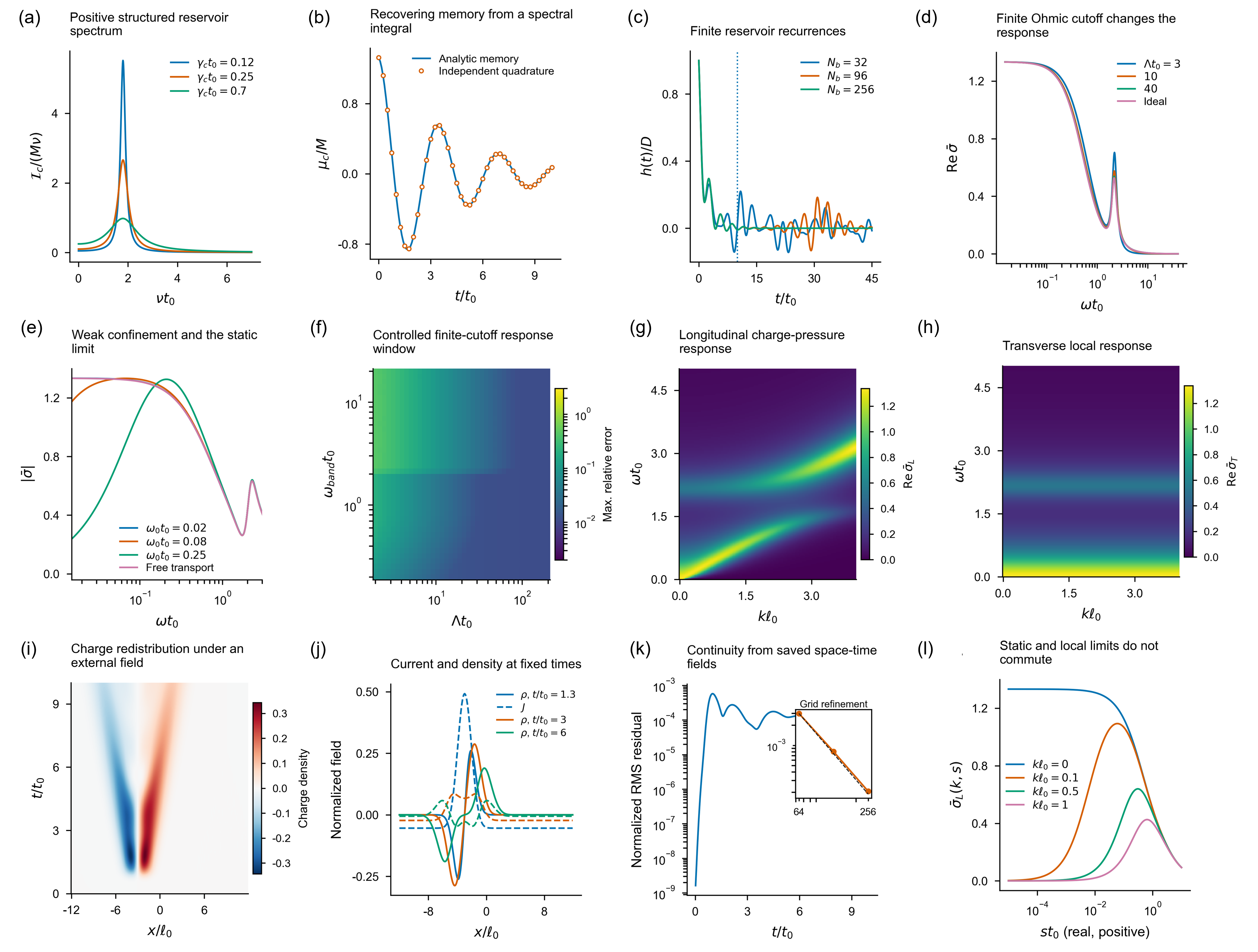}
	\caption{...}
	\label{fig4}
\end{figure}

\subsection{Energy-normalized current--coherence equations}
For a symmetry-resolved scalar channel, let $D>0$, $g,\Omega\geq0$, and $\gm,\gc>0$. The retained current is $J=\sqrt D j$. The reduced electric port acts on this total-current coordinate; the coherence contribution already contained in $J$ must not be added a second time.
The equations are
\begin{align}
 \dot j&=-\gm j-gu+\sqrt D E,\label{eq:min_j}\\
 \dot u&=g j-\gc u-\Omega v,\label{eq:min_u}\\
 \dot v&=\Omega u-\gc v,\qquad J=\sqrt D j.\label{eq:min_v}
\end{align}
Equivalently,
\begin{equation}\label{eq:A}
 A=\begin{pmatrix}-\gm&-g&0\\g&-\gc&-\Omega\\0&\Omega&-\gc\end{pmatrix},\quad
 b=\sqrt D\begin{pmatrix}1\\0\\0\end{pmatrix},\quad
 \Gamma=\diag(\gm,\gc,\gc).
\end{equation}
The opposite signs of the $g$ terms are fixed by reversible energy exchange. With equal signs the coupling would contribute an indefinite term to $\dot W_m$ and would require a different metric or a different physical interpretation. Here
\begin{equation}\label{eq:energy_min}
 W_m=\frac12(j^2+u^2+v^2),\qquad
 \dot W_m=EJ-\gm j^2-\gc(u^2+v^2).
\end{equation}
For a purely parabolic transport current, $J^2/(2D)=mn\bar v^2/2$. For the orbital total current, Eq.~\eqref{eq:energy_min} is a reduced storage identity; its identification with microscopic availability requires a separate susceptibility calculation.

The equations define an effective model even when the orbital map is not valid. In that case $g$ and $\Omega$ are independent effective coefficients, and an electronic-coherence interpretation requires a separate microscopic identification. A general projected bath may also introduce cross damping, unequal quadrature decay, or additional electric vertices. Those terms cannot be removed by naming the retained variables ``coherence''.

\begin{figure}[htbp]
	\centering
	\includegraphics[width=1.0\textwidth]{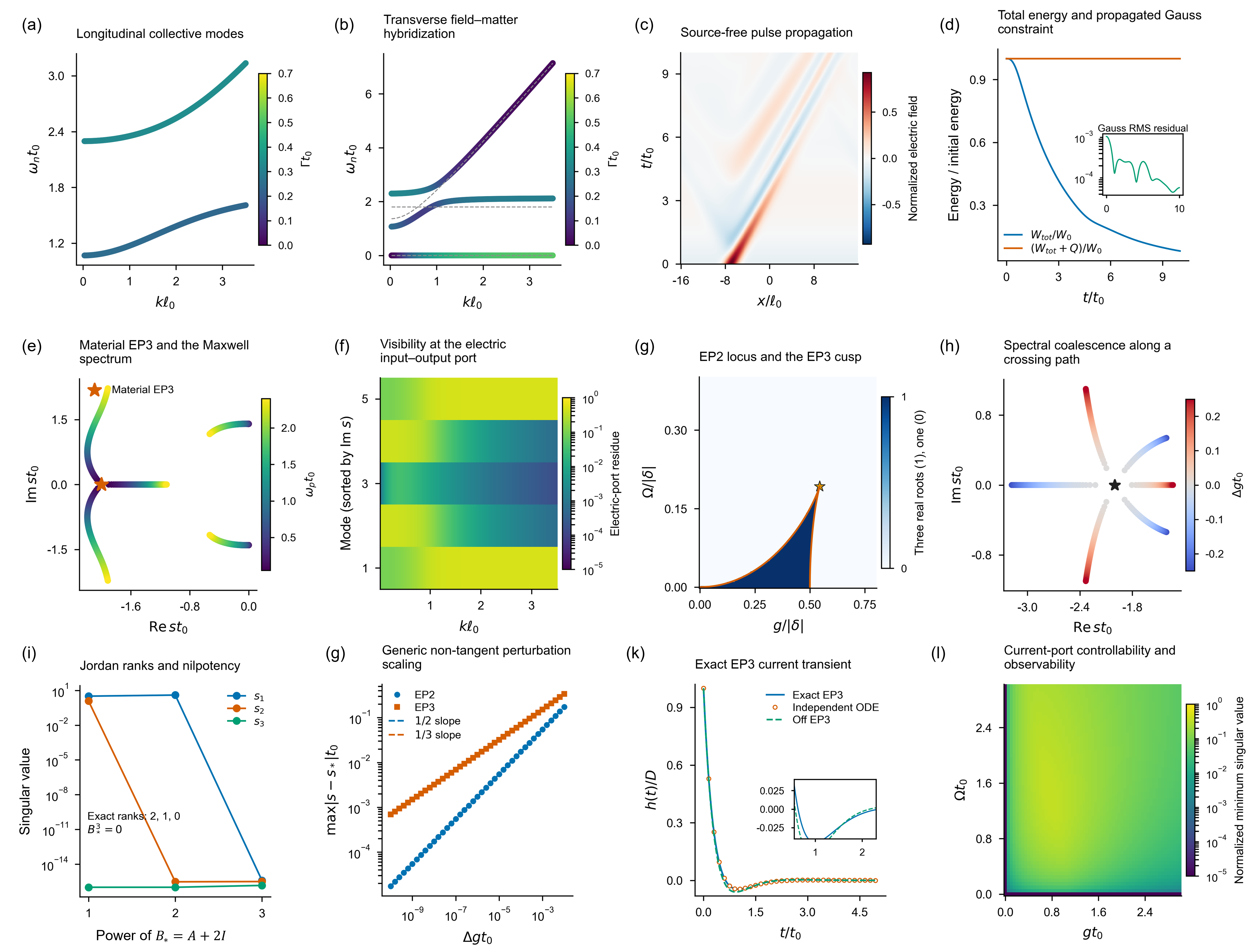}
	\caption{...}
	\label{fig5}
\end{figure}

\section{Memory conductivity and testable transport consequences}
\subsection{Exact reduced conductivity}

Solving the last two equations gives
\begin{equation}\label{eq:uv}
 u=\frac{g(s+\gc)}{N(s)}j,\qquad
 v=\frac{g\Omega}{N(s)}j,\qquad N(s)=(s+\gc)^2+\Omega^2.
\end{equation}
Substitution into the current equation yields
\begin{align}
 M_c(s)&=\frac{g^2(s+\gc)}{(s+\gc)^2+\Omega^2},\label{eq:M}\\
 \sigma(s)&=\frac{D}{s+\gm+M_c(s)}
 =\frac{D N(s)}{P(s)},\label{eq:sigma}\\
 P(s)&=(s+\gm)N(s)+g^2(s+\gc)=\det(sI-A).\label{eq:P}
\end{align}
In the frequency domain,
\begin{equation}\label{eq:sigma_omega}
 \sigma(\omega)=\frac{D[(\gc-\ii\omega)^2+\Omega^2]}
 {(\gm-\ii\omega)[(\gc-\ii\omega)^2+\Omega^2]+g^2(\gc-\ii\omega)}.
\end{equation}
The elimination modifies the inverse current response through a memory function. This dynamical interpretation does not imply that a measured rational response has a unique realization.

\subsection{The causal memory kernel}

For $u(0)=v(0)=0$, inverse transformation of \eqref{eq:M} gives
\begin{equation}\label{eq:memory_time}
 \dot j(t)+\gm j(t)+\int_0^t m_c(t-t')j(t')\,\dd t'=\sqrt D E(t),
 \qquad m_c(t)=g^2e^{-\gc t}\cos(\Omega t).
\end{equation}
The kernel has units $\mathrm{s}^{-2}$ and is causal. It changes sign when $\Omega\neq0$, which does not violate passivity; positivity constrains the full frequency-domain work, not the pointwise sign of a memory kernel. For a nonzero initial coherence, an additional term appears on the right:
\begin{equation}
 -g e^{-\gc t}[u(0)\cos(\Omega t)-v(0)\sin(\Omega t)].
\end{equation}
An experiment preparing internal coherence before applying a probe therefore needs an initial-state response in addition to a stationary conductivity.

\subsection{Absorption, DC response, and spectral weight}

The memory can be written
\begin{equation}\label{eq:Mpartial}
 M_c(-\ii\omega)=\frac{g^2}{2}\left[
 \frac1{\gc-\ii(\omega-\Omega)}+
 \frac1{\gc-\ii(\omega+\Omega)}\right].
\end{equation}
Its real part is
\begin{equation}\label{eq:ReM}
 \Rea M_c(-\ii\omega)=\frac{g^2\gc}{2}
 \left[\frac1{\gc^2+(\omega-\Omega)^2}
 +\frac1{\gc^2+(\omega+\Omega)^2}\right]\geq0.
\end{equation}
Consequently,
\begin{equation}\label{eq:absorption}
 \Rea\sigma(\omega)=D\frac{\gm+\Rea M_c(-\ii\omega)}
 {[\gm+\Rea M_c(-\ii\omega)]^2+[-\omega+\Ima M_c(-\ii\omega)]^2}\geq0.
\end{equation}
The DC response and the large-$s$ expansion are
\begin{align}
 \sigma_{\mathrm{DC}}&=\frac{D}{\gm+g^2\gc/(\gc^2+\Omega^2)},\label{eq:dc}\\
 \sigma(s)&=\frac{D}{s}-\frac{D\gm}{s^2}
 +\frac{D(\gm^2-g^2)}{s^3}+O(s^{-4}).\label{eq:high_s}
\end{align}
The conservative coupling reduces dc conduction at fixed positive rates while preserving the retained optical weight. It redistributes that weight over frequency.

\begin{proposition}[Spectral-weight identity]\label{prop:sumrule}
For positive damping and the minimal model,
\begin{equation}\label{eq:sumrule}
 \int_0^\infty\Rea\sigma(\omega)\,\dd\omega=\frac{\pi D}{2}.
\end{equation}
\end{proposition}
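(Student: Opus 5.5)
The plan is to obtain \eqref{eq:sumrule} from analyticity and the leading large-$s$ behavior \eqref{eq:high_s}, via a contour argument in the upper half $\omega$-plane. The first step is to establish analyticity. Since $\gm,\gc>0$, the matrix $A$ in \eqref{eq:A} satisfies $A+A^{\mathsf T}=-2\Gamma<0$. Hence every eigenvalue $\lambda$ of $A$ with eigenvector $x$ obeys $\Rea\lambda\,|x|^2=-x^\dagger\Gamma x<0$. The zeros of $P(s)=\det(sI-A)$ therefore lie in $\Rea s<0$, which means $\sigma(\omega)=DN(-\ii\omega)/P(-\ii\omega)$ is analytic in $\Ima\omega\ge0$, including the real axis. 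Because $\deg P=3$ and $\deg N=2$, $\sigma$ decays like $\ii D/\omega$ uniformly at large $|\omega|$. The reality of the time-domain response gives $\sigma(-\omega)=\overline{\sigma(\omega)}$ for real $\omega$, so $\Rea\sigma$ is even. It follows that $\int_0^\infty\Rea\sigma\,\dd\omega=\tfrac12\Rea\int_{-\infty}^{\infty}\sigma\,\dd\omega$.

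The second step evaluates the full-line integral by closing in the upper half-plane. Write $\sigma(\omega)=\ii D/(\omega+\ii\gm)+R(\omega)$, where the first term is the pure Drude piece. Using \eqref{eq:high_s}, its expansion $\ii D/\omega+D\gm/\omega^2+\dots$ reproduces the first two orders of $\sigma$. The remainder $R$ is therefore $O(|\omega|^{-3})$, since the $s^{-2}$ coefficient $-D\gm$ is matched as well. $R$ is analytic in the closed upper half-plane, so its integral over the real line vanishes by Jordan's lemma: the large semicircle contributes $O(1/\rho^2)\to0$. The Drude piece is elementary, with $\Rea\,\ii D/(\omega+\ii\gm)=D\gm/(\omega^2+\gm^2)$, whose integral over $\mathbb R$ is $\pi D$. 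Halving gives $\pi D/2$.

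As a cross-check, I would note the equivalent route via the initial value of the response function. Define $\sigma(t)=D\,e_1^{\mathsf T}e^{At}e_1$ for $t>0$, which decays exponentially by the dissipativity shown above. Then $\int_{-\infty}^\infty\Rea\sigma\,\dd\omega=\pi\sigma(0^+)=\pi D$, consistent with the first term of \eqref{eq:high_s}.

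The main obstacle is justifying that the $1/\omega$ tail, which is not absolutely integrable, is handled correctly. The Drude subtraction handles it: the $\ii D/\omega$ part is purely imaginary at leading order, and the real part of $\sigma$ decays like $D\gm/\omega^2$, which is integrable. Beyond that, the argument only needs the location of the poles. I would therefore state the dissipativity bound on the eigenvalues of $A$ explicitly, rather than rely on the Routh--Hurwitz conditions for $P$.
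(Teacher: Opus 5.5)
Your proposal is correct and follows the paper's own argument. The paper proves the identity by cosine-transform inversion of the impulse response $h(t)=b^{T}e^{At}b$ at $t=0^+$, and notes the equivalent contour argument from analyticity, reality, and $\sigma\sim\ii D/\omega$; your main route is that contour argument, made rigorous by the Drude subtraction, and your cross-check is the paper's primary route. One small point of terminology: the arc contribution of $R=O(|\omega|^{-3})$ vanishes by the elementary length-times-maximum estimate, not by Jordan's lemma, which concerns oscillatory factors $e^{\ii a\omega}$.
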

\textit{Proof.} The impulse response is $h(t)=b^Te^{At}b$ for $t\geq0$, with $h(0^+)=D$ and exponential decay. The cosine-transform inversion at $t=0^+$ gives \eqref{eq:sumrule}. Equivalently it follows by a contour integral using analyticity, reality, and the leading term $\sigma(\omega)\sim\ii D/\omega$.

This identity is exact for the rational model integrated over all frequencies. Its use as a material sum rule requires the model to represent the relevant spectral weight over that range. A low-energy fit cannot be extrapolated to arbitrarily high frequency simply to enforce a bare-electron sum rule.

\subsection{Zeros and cancellation}

When $g\Omega\neq0$, the zeros of $N$ are $s_z=-\gc\pm\ii\Omega$, and $P(s_z)=g^2(s_z+\gc)\neq0$. They are conductivity zeros, not poles. If $g=0$, $N$ cancels and only the Drude pole remains. If $\Omega=0$,
\begin{equation}\label{eq:Omega0}
 \sigma(s)=\frac{D(s+\gc)}{(s+\gm)(s+\gc)+g^2},
\end{equation}
and one decoupled coherence coordinate is unobservable. Multiplying by a denominator without checking these cancellations can create fictitious modes in a response calculation.

\subsection{Uniform convergence on a response window}

Let $\sigma_D(s)=D/(s+\gm)$ and define $\eta_D(s)=M_c(s)/(s+\gm)$. Then
\begin{equation}\label{eq:Drude_error}
 \frac{\sigma(s)-\sigma_D(s)}{\sigma_D(s)}
 =-\frac{\eta_D(s)}{1+\eta_D(s)}.
\end{equation}
On a compact set where $\sup|\eta_D|=\eta<1$,
\begin{equation}\label{eq:Drude_bound}
 \sup\left|\frac{\sigma-\sigma_D}{\sigma_D}\right|
 \leq\frac{\eta}{1-\eta}.
\end{equation}
Thus $g\to0$ at fixed $\gm,\gc,\Omega$ yields the local Drude law uniformly away from singular limits. A fast-coherence limit also gives that law if $M_c\to0$. However, if $g^2/\gc\to\Lambda>0$ while $\Omega/\gc\to0$, then $M_c(s)\to\Lambda$ at fixed $s$ and the limiting rate is $\gm+\Lambda$. Eliminating an internal variable can renormalize transport rather than remove its effect.

A Drude limit is therefore a small-memory limit on a specified frequency window. It is not synonymous with $\hbar\to0$: even a degenerate parabolic Fermi gas has the Drude weight $ne^2/m$ when its first current moment relaxes with one rate.

\subsection{Dephasing dependence and the meaning of quantum coherence}

Holding $D,\gm,g,\Omega$ fixed,
\begin{equation}\label{eq:dc_derivative}
 \frac{\partial\sigma_{\mathrm{DC}}}{\partial\gc}
 =-\frac{Dg^2(\Omega^2-\gc^2)}
 {(\gc^2+\Omega^2)^2[\gm+g^2\gc/(\gc^2+\Omega^2)]^2}.
\end{equation}
The memory contribution to the dc rate is largest at $\gc=\Omega$. Increasing dephasing suppresses conduction for $\gc<\Omega$ but increases it for $\gc>\Omega$. This turnover is a definite model prediction under the fixed-parameter condition. A temperature scan that also changes carrier density, momentum scattering, or $g$ does not isolate this derivative.

This turnover is a conditional response prediction, not a guaranteed temperature trend. Temperature, gating, and disorder can change $D,\gm,g,\Omega$ simultaneously. In the orbital construction, Eq.~\eqref{eq:orbital_constraint} further restricts admissible parameter changes.

All deterministic equations above also admit a classical passive realization. Accordingly, a non-Drude line shape does not establish an electronic quantum origin. Such an interpretation requires independent orbital energies and occupations, a calculated collision operator, and a controlled projection, or a separate coherence-sensitive measurement. A structured oscillator bath provides another realization of the same memory (Appendix~\ref{app:bath}); it does not identify the auxiliary pair as electronic density-matrix elements.

\section{Charge conservation and self-consistent electromagnetic response}
\subsection{Conserving spatial extension}
Let $\rho$ now denote the macroscopic perturbation of charge density, distinguished from the microscopic density operator by the absence of a hat. Write the charge availability as $\rho^2/(2C_\rho)$ with $C_\rho>0$. For a simple electron gas, $C_\rho=e^2\partial n/\partial\mu$. Define
\begin{equation}\label{eq:ch}
 y=\frac{\rho}{\sqrt{C_\rho}},\qquad
 c_h^2=\frac{D}{C_\rho},\qquad \bm J=\sqrt D\bm j.
\end{equation}
The coefficient $c_h$ is a compressibility-controlled hydrodynamic speed. Its relation to Fermi velocity depends on the kinetic regime and closure; the collisionless high-frequency pressure coefficient is not generally the isothermal compressibility coefficient.

The conserving equations are
\begin{align}
 \partial_t y&=-c_h\nabla\cdot\bm j,\label{eq:nonlocal_y}\\
 \partial_t\bm j&=-\gm\bm j-g\bm u-c_h\nabla y+\sqrt D\bm E,\label{eq:nonlocal_j}\\
 \partial_t\bm u&=g\bm j-\gc\bm u-\Omega\bm v,\label{eq:nonlocal_u}\\
 \partial_t\bm v&=\Omega\bm u-\gc\bm v.\label{eq:nonlocal_v}
\end{align}
Multiplying the first equation by $\sqrt{C_\rho}$ recovers exact charge continuity. The pressure force in the current equation is $-c_h\nabla y=-\sqrt D\nabla(\rho/C_\rho)$, which is the force from the charge availability after normalization.

For real fields, the local matter energy identity is
\begin{equation}\label{eq:nonlocal_energy}
 \partial_t\left[\frac{y^2+|\bm j|^2+|\bm u|^2+|\bm v|^2}{2}\right]
 +\nabla\cdot(c_h y\bm j)
 =\bm E\cdot\bm J-\gm|\bm j|^2-\gc(|\bm u|^2+|\bm v|^2).
\end{equation}
The pressure terms therefore exchange local density and kinetic availability while producing a matter energy flux. A damping term $-\gamma_ny$ would destroy charge conservation unless an explicit charge-exchange reservoir were introduced.

\subsection{Longitudinal and transverse conductivity}

For $k>0$, define $P_L=\hat{\bm k}\hat{\bm k}^{T}$ and $P_T=I-P_L$. Isotropy gives
\begin{equation}\label{eq:sigma_tensor}
 \bm\sigma(\bm k,s)=\sigma_L(k,s)P_L+\sigma_T(s)P_T.
\end{equation}
The transverse response is \eqref{eq:sigma}; the longitudinal response follows from $y=-\ii c_h k j_L/s$:
\begin{align}
 \sigma_L(k,s)&=\frac{D}{s+\gm+M_c(s)+c_h^2k^2/s}\label{eq:sigmaL}\\
 &=\frac{D s N(s)}{Q_L(k,s)},\qquad
 Q_L=(s^2+\gm s+c_h^2k^2)N(s)+g^2s(s+\gc).\label{eq:QL}
\end{align}
The electric field here is the total field acting on the carriers. Coulomb self-consistency is not yet inserted into $Q_L$. This separation prevents double counting of screening.

This pressure closure is an additional long-wavelength approximation, not a consequence of the uniform two-orbital commutators. The scalar coefficients below assume an isotropic extension, such as equivalent orbital channels along the three axes. A single $s$--$p_x$ doublet instead describes one anisotropic channel and requires a tensor response outside that channel.

\subsection{Maxwell coupling and total energy}
The coupled equations are
\begin{align}
 \partial_t\bm B&=-\nabla\times\bm E,\qquad \bm B=\mu_b\bm H,\\
 \epsb\partial_t\bm E&=\nabla\times\bm H-\bm J,\label{eq:Maxwell}\\
 \nabla\cdot(\epsb\bm E)&=\rho,\qquad \nabla\cdot\bm B=0.\label{eq:Gauss}
\end{align}
The background displacement $\epsb\bm E$ excludes the retained carrier current. If that current were also included in a fitted background permittivity, the same response would be counted twice. Bound responses outside the model window can be included in the background only with consistent dispersion and loss treatment. Constant positive $\epsb$ is a finite-window idealization.

Taking the divergence of Amp\`ere's equation and using continuity gives
\begin{equation}\label{eq:constraint_prop}
 \partial_t[\epsb\nabla\cdot\bm E-\rho]=0,\qquad
 \partial_t(\nabla\cdot\bm B)=0.
\end{equation}
Hence compatible initial data remain compatible. A numerical implementation should enforce these constraints rather than treat spurious longitudinal electromagnetic variables as independent physical modes.

\subsection{Total energy balance}

Combining Poynting's theorem with \eqref{eq:nonlocal_energy} gives
\begin{align}
 \partial_t W_{\mathrm{tot}}+\nabla\cdot\left(\bm E\times\bm H+c_hy\bm j\right)
 &=-\gm|\bm j|^2-\gc(|\bm u|^2+|\bm v|^2),\label{eq:total_energy}\\
 W_{\mathrm{tot}}&=\frac{\epsb|\bm E|^2+\mu_b|\bm H|^2
 +y^2+|\bm j|^2+|\bm u|^2+|\bm v|^2}{2}.
\end{align}
All terms have units of energy density.

For periodic boundaries, or boundaries without incoming total energy flux, the integrated energy is nonincreasing. Charge continuity propagates Gauss's constraint. A reflecting carrier boundary requires $\bm j\cdot\bm n=0$; the internal pair has no spatial derivatives and needs no additional spatial boundary condition in this closure. These statements establish consistency of the effective Maxwell--matter evolution. They do not assert a microscopic relativistic propagation theory.

\subsection{Longitudinal collective and transverse electromagnetic modes}
For $k>0$, Gauss's law fixes $E_L=\rho/(\ii k\epsb)$. Define
\begin{equation}\label{eq:plasma}
 \omega_p^2=\frac{D}{\epsb},\qquad
 v_b=\frac1{\sqrt{\mu_b\epsb}}.
\end{equation}
Substitution into the density--current equations gives the longitudinal characteristic polynomial
\begin{equation}\label{eq:PL}
 \mathcal P_L(k,s)=
 (s^2+\gm s+c_h^2k^2+\omega_p^2)N(s)
 +g^2s(s+\gc).
\end{equation}
For $g=0$, the physical carrier plasmon satisfies
\begin{equation}\label{eq:plasmon_g0}
 s^2+\gm s+c_h^2k^2+\omega_p^2=0,
\end{equation}
while the decoupled internal pair has its own dark modes. In frequency language the uncoupled relation is $\omega(\omega+\ii\gm)=\omega_p^2+c_h^2k^2$.

The same polynomial follows from the longitudinal effective permittivity
\begin{equation}\label{eq:epsilonL}
 \varepsilon_L^{\mathrm{eff}}(k,s)=\epsb+\frac{\sigma_L(k,s)}s,
\end{equation}
provided zeros, poles, and removable factors are checked. At fixed $k>0$, \eqref{eq:PL} is the characteristic polynomial of the constrained four-state system $(y,j_L,u_L,v_L)$, not an unconstrained six-component electromagnetic determinant.
Each transverse polarization has the characteristic polynomial
\begin{equation}\label{eq:PT}
 \mathcal P_T(k,s)=(s^2+v_b^2k^2)P(s)+\omega_p^2sN(s).
\end{equation}
It has degree five and represents the augmented state $(E_T,H_T,j_T,u_T,v_T)$ at $k>0$. The corresponding reduced dispersion is
\begin{equation}
 s^2+v_b^2k^2+\frac{s\sigma_T(s)}{\epsb}=0.
\end{equation}
At $k=0$, a static magnetic component separates and gives the explicit factor $s$ in \eqref{eq:PT}; the remaining electric--matter subsystem has degree four. If $g=0$, a pair of material poles cancels in the optical response but remains as decoupled internal eigenvalues of the full augmented generator. This distinction between internal modes and optically coupled modes is necessary when comparing a polynomial calculation with a measured spectrum.

The two polynomials describe freely evolving self-consistent fields. Their roots differ from the poles of $\sigma(s)$ measured with an imposed field. This distinction is needed to compare a constitutive fit with plasmon or polariton measurements. The detailed state-matrix checks are given in Appendix~\ref{app:det}.

\begin{figure}[htbp]
	\centering
	\includegraphics[width=1.0\textwidth]{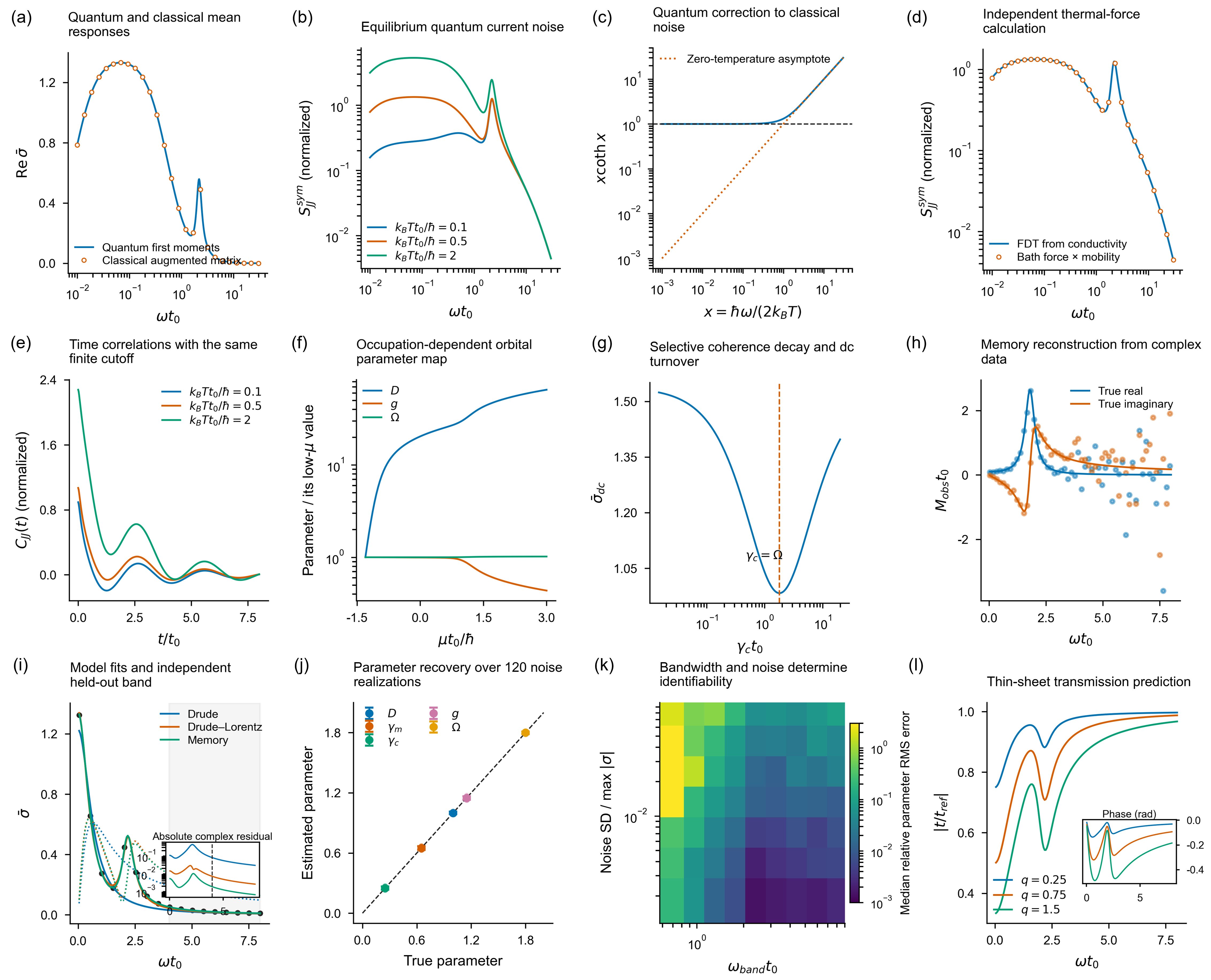}
	\caption{...}
	\label{fig6}
\end{figure}

\section{Optical reconstruction and model discrimination}
\subsection{What complex conductivity can identify}

If $D$ and $\gm$ are known independently or determined from a valid broad spectral range, the experimentally reconstructed quantity
\begin{equation}\label{eq:reconstruct}
 M_{\mathrm{obs}}(s)=\frac{D}{\sigma(s)}-s-\gm
\end{equation}
is zero in an ideal one-pole Drude model and equals \eqref{eq:M} in the proposed closure. Its conjugate pole pair at $-\gc\pm\ii\Omega$ has residues $g^2/2$. This representation is useful for distinguishing a transport memory from a simple additive optical resonance, but inversion amplifies uncertainty near small $\sigma$.

For noiseless knowledge of the rational transfer function with $g>0$, $\Omega>0$, the asymptotic moments identify $D$ and $\gm$ from \eqref{eq:high_s}, and the poles and residues of $M_{\mathrm{obs}}$ identify $\gc,\Omega,g^2$. The sign of $g$ is a coordinate convention and cannot be inferred. Finite bandwidth and noise can make these formally identifiable parameters practically nonidentifiable.

This formal uniqueness does not imply accurate estimation in a restricted THz band. If the memory varies little across the observed band, it can be absorbed into an apparent transport rate and inertia; $g,\gc,\Omega$ then become strongly correlated. Fits should use the complex response with the measurement uncertainty, report parameter correlations or profile intervals, and test robustness to the fitted bandwidth. Inverse-response reconstruction is a diagnostic; fitting the measured complex transmission directly avoids amplifying conductivity-inversion errors near its zeros.

\subsection{Transmission and independent microscopic tests}

For an optically thin homogeneous sheet with sheet conductance $G_s(\omega)=d\sigma_T(\omega)$ at an interface between nonmagnetic media of indices $n_1,n_2$, neglecting multiple internal reflections,
\begin{equation}\label{eq:sheet_t}
 t(\omega)=\frac{2n_1}{n_1+n_2+Z_0G_s(\omega)},\qquad
 \frac{t}{t_0}=\frac{n_1+n_2}{n_1+n_2+Z_0G_s(\omega)},
\end{equation}
where $t_0=2n_1/(n_1+n_2)$ and $Z_0$ is the vacuum impedance. This formula follows from tangential field continuity and the sheet-current jump condition. It applies only when thickness and nonlocal boundary effects can be neglected. A thick or spatially dispersive sample needs the full boundary-value problem and an explicitly specified carrier boundary condition.

Equivalently, for a known thickness and $r=t/t_0$,
\begin{equation}
 \sigma_T(\omega)=\frac{n_1+n_2}{Z_0d}\left[r(\omega)^{-1}-1\right].
\end{equation}
This inversion presumes the stated thin-sheet reference geometry, calibrated phase, and negligible unresolved multiple reflections. A thicker film or substrate etalon requires a transfer-matrix forward model. Far-field transverse transmission does not independently determine the longitudinal compressibility $C_\rho$; a charge-sensitive or finite-wave-vector probe is needed for that part of the model.

The physically discriminating comparison uses a one-pole Drude response, an additive Drude--Lorentz response, and the present memory response under the same measurement model. The inverse-response form is not, by itself, a uniqueness theorem distinguishing all classical or additive realizations. Support for the orbital mechanism requires that independently determined $m^*,\lambda,\Omega_0$, orbital occupation response, and collisions predict both the spectrum and its change under a controlled perturbation. Equation~\eqref{eq:orbital_constraint} supplies an additional check beyond fit quality.

\section{Conclusions}
The central result is a conditional link from a physical orbital current vertex to an oscillatory transport memory and its electromagnetic observables. The two-orbital Hamiltonian identifies the current force exactly; the finite moment and diagonal collision approximations are separate hypotheses. Within those hypotheses, the orbital parameters obey $g^2+\Omega^2=\Omega_0^2$, and the reduced response preserves a positive energy balance while redistributing optical weight. Selective changes of coherence damping produce a calculable dc turnover at fixed remaining parameters.

The conserving spatial extension separates total-field conductivity from self-consistent collective modes and provides a direct route to optical forward modelling. The principal remaining task is material-specific: calculate the orbital response and full collision matrix, quantify the omitted memory, and compare with complex spectra and an independent coherence probe. Exceptional degeneracies of the reduced matrix are optional spectral properties, recorded in Appendix~\ref{app:EP}; neither they nor non-Hermitian terminology are required for the transport mechanism developed here.

\appendix
\section{Exact response, physical vertices, and local continuity}\label{app:response}
A continuum starting point is
\begin{equation}\label{eq:H}
 \hat H[\bm A,\phi]=\sum_{a=1}^{N}
 \left\{\frac{[\hat{\bm p}_a+e\bm A(\hat{\bm r}_a,t)]^2}{2m}
 +V_{\mathrm{lat}}(\hat{\bm r}_a)-e\phi(\hat{\bm r}_a,t)\right\}
 +\hat H_{\mathrm{int}}+\hat H_{\mathrm{internal}}.
\end{equation}
Here $\hat H_{\mathrm{internal}}$ may describe orbital, spin, interband, or other neutral internal degrees of freedom. Its choice is part of the model, not determined by Maxwell equations. If the electromagnetic field is self-consistent, the longitudinal Hartree interaction must be counted either in the matter Hamiltonian or in the self-consistent field, with a specified subtraction; including it twice gives an incorrect collective spectrum.

The charge and matter current are defined by
\begin{equation}\label{eq:charge_current}
 \hat\rho_{\mathrm{ch}}(\bm r)=-e\sum_a\delta(\bm r-\hat{\bm r}_a),\qquad
 \hat{\bm j}(\bm r)=-\frac{\delta\hat H}{\delta\bm A(\bm r)}.
\end{equation}
For the kinetic term the current is the symmetrized velocity density. In particular, for a homogeneous field,
\begin{equation}\label{eq:diamagnetic}
 \hat{\bm J}[\bm A]=\hat{\bm J}^{p}-\frac{Ne^2}{m\mathcal V}\bm A,
 \qquad \hat{\bm J}^{p}=-\frac{e}{m\mathcal V}\sum_a\hat{\bm p}_a,
\end{equation}
where $\mathcal V$ is the sample volume. The second term is the diamagnetic contact contribution. It is essential in a velocity-gauge conductivity calculation. In a crystal, the corresponding contact tensor is the field derivative of the full Hamiltonian; it is not generally $ne^2/m$ within a truncated band description.
The matter state satisfies
\begin{equation}\label{eq:master}
 \partial_t\hat\rho=\cL[\bm A,\phi]\hat\rho
 =-\frac{\ii}{\hbar}[\hat H[\bm A,\phi],\hat\rho]
 +\cD[\bm A,\phi]\hat\rho.
\end{equation}
In a finite-dimensional regularization, a Markovian completely positive, trace-preserving generator has the form
\begin{equation}\label{eq:GKSL}
 \cD\rho=\sum_\alpha\left(
 L_\alpha\rho L_\alpha^\dagger
 -\frac12\{L_\alpha^\dagger L_\alpha,\rho\}\right),
\end{equation}
with rates absorbed into $L_\alpha$ \cite{lindblad1976,gks1976}. For continuum electrons, unbounded operators require additional domain and existence hypotheses; finite-dimensional GKSL results cannot be transferred to that setting without qualification. Weak-coupling Markov limits also require hypotheses on reservoirs and time scales \cite{davies1974}.

Trace preservation guarantees total probability conservation. It does not guarantee local charge conservation, a Gibbs stationary state, detailed balance, or electromagnetic gauge covariance. We impose these separately when they are needed. A thermal stationary state is denoted $\rho_0$; in a closed matter model it is $Z^{-1}\exp[-\beta(\hat H_0-\mu\hat N)]$, with $\beta=(k_BT)^{-1}$. A reservoir approximation must actually leave the selected $\rho_0$ stationary.
\begin{equation}\label{eq:gauge}
 \bm A'=\bm A+\nabla\chi,\quad \phi'=\phi-\partial_t\chi,\quad
 U_\chi=\exp\left[-\frac{\ii e}{\hbar}\sum_a\chi(\hat{\bm r}_a,t)\right],\quad
 \rho'=U_\chi\rho U_\chi^\dagger.
\end{equation}
The Hamiltonian transforms with the time-dependent unitary term, and a covariant dissipator obeys
\begin{equation}\label{eq:Dbathgauge}
 \cD'[U_\chi\rho U_\chi^\dagger]
 =U_\chi\cD[\rho]U_\chi^\dagger.
\end{equation}
A relaxation operator diagonal in an arbitrarily fixed truncated basis need not satisfy this identity. Gauge issues in truncated matter--field models have been examined explicitly in the literature \cite{gustin2023,gustinerratum2025,ventura2017}.

Linearization of the complete generator gives
\begin{equation}\label{eq:linear_generator}
 \partial_t\delta\rho=\cL_0\delta\rho+\delta\cL\,\rho_0,
 \qquad \delta\cL\,\rho_0
 =-\frac{\ii}{\hbar}[\delta H,\rho_0]+\delta\cD\,\rho_0.
\end{equation}
Thus the source has a Hamiltonian vertex and, whenever the bath is field dependent, a dissipative vertex. Dropping the latter while changing gauge changes the model rather than merely changing representation. The same warning applies to a field-dependent measured current.

\subsection{Local continuity and a conserving collision approximation}
For a locally charge-conserving Hamiltonian,
\begin{equation}\label{eq:operator_cont}
 \frac{\ii}{\hbar}[H,\hat\rho_{\mathrm{ch}}]+\nabla\cdot\hat{\bm j}_H=0.
\end{equation}
If the bath produces no charge transport, one must have $\cD^*\hat\rho_{\mathrm{ch}}=0$. More generally, a bath-mediated current is required such that
\begin{equation}\label{eq:bath_cont}
 \cD^*\hat\rho_{\mathrm{ch}}+\nabla\cdot\hat{\bm j}_{\mathrm{bath}}=0.
\end{equation}
Here $\cD^*$ is the Heisenberg adjoint. The physical current in Maxwell equations is the total current appearing in continuity.

Taking an expectation yields the linear Ward constraint
\begin{equation}\label{eq:ward}
 s\,\delta\rho_{\mathrm{ch}}(\bm k,s)+\ii\bm k\cdot\delta\bm J(\bm k,s)=0.
\end{equation}
For a zero-initial-data perturbation this relation must hold for every external potential, including a pure gauge. Global number conservation only gives the $\bm k=0$ part and cannot replace the finite-$\bm k$ identity.

A useful kinetic illustration is a number-conserving BGK collision term
\begin{equation}\label{eq:BGK}
 C[f]=-\gm(f-f_{\mathrm{le}}),\qquad
 \int(f-f_{\mathrm{le}})\,\dd\bm p=0,
\end{equation}
where the local equilibrium contains a chemical-potential perturbation fixed by the local density. Relaxation to a globally fixed distribution would instead spuriously remove density. This is the central conservation issue addressed by Mermin's construction \cite{mermin1970}. Equation \eqref{eq:BGK} is a kinetic approximation and is not asserted to be a GKSL generator on an arbitrary many-electron Hilbert space.
\subsection{The Liouvillian resolvent}
Let $\lambda_a$ denote independent external potentials or other probes, and define
\begin{equation}\label{eq:vertices}
 S_a=\left.\frac{\partial\cL}{\partial\lambda_a}\right|_0\rho_0,
 \qquad C_{ia}=\Tr\left(\rho_0\left.\frac{\partial\hat J_i}{\partial\lambda_a}\right|_0\right).
\end{equation}
The trace-zero subspace contains the perturbation. Assume $\cL_0$ generates a bounded semigroup there, or use a finite regularization in which the following expressions are ordinary matrix resolvents.

For zero initial perturbation, Duhamel's formula and the Laplace transform give
\begin{align}
 \delta\rho(t)&=\int_0^t e^{(t-t')\cL_0}S_a\lambda_a(t')\,\dd t',\\
 \delta\rho(s)&=(s-\cL_0)^{-1}S_a\lambda_a(s),\\
 \delta J_i(s)&=K_{ia}(s)\lambda_a(s),\qquad
 K_{ia}(s)=C_{ia}+\Tr\{\hat J_i(s-\cL_0)^{-1}S_a\}.\label{eq:resolvent}
\end{align}
Repeated indices are summed. In a homogeneous vector-potential gauge, $\bm E=-s\bm A$ and
\begin{equation}\label{eq:sigma_exact}
 \sigma_{ij}(s)=-\frac{1}{s}\left[
 C_{ij}^{A}+\Tr\{\hat J_i(s-\cL_0)^{-1}S_j^A\}\right].
\end{equation}
At finite $\bm k$, the scalar potential and longitudinal vector potential must be combined subject to \eqref{eq:ward}. Simply dividing one arbitrary potential-response matrix by $s$ does not define a gauge-independent longitudinal conductivity.

Equation \eqref{eq:sigma_exact} is the exact response of the specified Markovian model. It is not exact for an unspecified physical reservoir. It also makes clear why there is no unique function $\sigma(\bm k,\omega;\hbar,\tau)$ without a Hamiltonian, equilibrium, and collision mechanism.

\subsection{The Hamiltonian limit and the Kubo expression}
For a finite sample in a length gauge, let $\hat P_j$ be the total electric dipole and $\delta H=-\hat P_jE_j$. The Hamiltonian source is $\ii[\hat P_j,\rho_0]/\hbar$. In a closed system, the response kernel is the retarded commutator of the measured current and the perturbing dipole \cite{kubo1957}.

The resolvent formula reduces to
\begin{equation}\label{eq:length_kubo}
 \sigma_{ij}(s)=\frac{\ii}{\hbar\mathcal V}\int_0^\infty e^{-st}
 \langle[\hat I_i(t),\hat P_j(0)]\rangle_0\,\dd t,
\end{equation}
where $\hat I_i=\dot{\hat P}_i$ is an extensive current. A periodic infinite conductor requires a twist, a vector potential, or another well-defined thermodynamic-limit construction; the unbounded position operator cannot be used naively as a Bloch-space dipole.

In an energy eigenbasis, any response entry has a Lehmann-type decomposition with denominators $s+\ii(E_a-E_b)/\hbar$ and occupation differences. If $\cL_0$ is diagonal on those coherence operators, the denominators become $s+\gamma_{ab}+\ii(E_a-E_b)/\hbar$. If it is not diagonal, the matrix resolvent must be retained. Replacing every denominator by the same scalar decay rate is an additional approximation, not a consequence of quantum mechanics.

\section{Projection and the validity of the three-state closure}\label{app:projection}
A useful positive equilibrium inner product on observables is the Kubo--Mori form
\begin{equation}\label{eq:KM}
 (A,B)_0=\int_0^1\Tr\left(\rho_0^{1-u}\delta A^\dagger\rho_0^u\delta B\right)\dd u,
 \qquad\delta A=A-\Tr(\rho_0A).
\end{equation}
For independent Hermitian observables $X_a$, their covariance matrix $G_{ab}=(X_a,X_b)_0$ is positive on the chosen subspace. Conserved or redundant variables must be treated before inverting it. This metric has a physical equilibrium origin; a Euclidean norm on unscaled current and coherence variables would compare quantities with incompatible units.

Projection and memory-function methods for transport are established \cite{mori1965}. We use their algebraic block-resolvent content without assuming that all omitted variables decay rapidly.

Let $\cP$ be a projection onto selected linear perturbations and $\cQ=1-\cP$. Writing $R(s)=(s-\cL_0)^{-1}$, block elimination gives
\begin{align}
 \cP R(s)\cP
 &=\left[s-\cL_{PP}-\Sigma_Q(s)\right]^{-1}_{P},\label{eq:Feshbach}\\
 \Sigma_Q(s)&=\cL_{PQ}(s-\cL_{QQ})^{-1}\cL_{QP}.\label{eq:Sigma}
\end{align}
When the drive has an omitted component, the effective source is
\begin{equation}\label{eq:effective_source}
 S_P^{\mathrm{eff}}(s)=S_P+\cL_{PQ}(s-\cL_{QQ})^{-1}S_Q.
\end{equation}
There are corresponding output corrections if the measured current has a $Q$ component. Retaining only \eqref{eq:Sigma} and forgetting those vertices need not reproduce the exact conductivity.

\subsection{A quantitative condition for a closure}
\paragraph{Closure condition.}\label{ass:Q}
On a specified compact response region $\mathcal K$ in the right half-plane, the omitted resolvent is bounded. A proposed reduced inverse response $Z_0(s)$ is invertible there. The residual inverse-response error is $\Delta Z(s)$, including the difference between the exact and approximate memory.

\begin{proposition}[Resolvent error bound]\label{prop:error}
If $\eta=\sup_{s\in\mathcal K}\|Z_0(s)^{-1}\Delta Z(s)\|<1$, then
\begin{equation}\label{eq:resolvent_error}
 \|[Z_0+\Delta Z]^{-1}-Z_0^{-1}\|
 \leq\frac{\|Z_0^{-1}\|^2\|\Delta Z\|}{1-\eta}
\end{equation}
pointwise on $\mathcal K$. Source and output errors add the corresponding products with these bounded resolvents.
\end{proposition}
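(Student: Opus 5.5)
The plan is to treat the bound as a pointwise statement for each fixed $s\in\mathcal K$ and apply a Neumann-series perturbation argument for the inverse of $Z_0+\Delta Z$. The strategy is to factor out $Z_0$, invert the resulting perturbation of the identity, and then subtract $Z_0^{-1}$. Throughout, $\|\cdot\|$ is any submultiplicative operator norm on the projected space. The Kubo--Mori metric of Eq.~\eqref{eq:KM} is the natural choice, but the argument does not depend on it.

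The first step is the factorization $Z_0+\Delta Z=Z_0(I+Z_0^{-1}\Delta Z)$. This is legitimate because the closure condition assumes $Z_0(s)$ is invertible on $\mathcal K$. Set $T=Z_0^{-1}\Delta Z$; by hypothesis $\|T\|\le\eta<1$. The Neumann series $\sum_{n\ge0}(-T)^n$ then converges absolutely, so $I+T$ is invertible with $\|(I+T)^{-1}\|\le(1-\eta)^{-1}$. Consequently $Z_0+\Delta Z$ is invertible and $[Z_0+\Delta Z]^{-1}=(I+T)^{-1}Z_0^{-1}$. This also shows that the exact projected resolvent of Eq.~\eqref{eq:Feshbach} exists on $\mathcal K$, identified through $Z_0+\Delta Z=s-\cL_{PP}-\Sigma_Q(s)$.

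The second step uses the second resolvent identity. I would write
\begin{equation*}
[Z_0+\Delta Z]^{-1}-Z_0^{-1}=-[Z_0+\Delta Z]^{-1}\Delta Z\,Z_0^{-1}=-(I+T)^{-1}Z_0^{-1}\Delta Z\,Z_0^{-1}.
\end{equation*}
Taking norms and using submultiplicativity gives the bound $\|Z_0^{-1}\|^2\|\Delta Z\|/(1-\eta)$, which is Eq.~\eqref{eq:resolvent_error}. Since this holds for each $s\in\mathcal K$, the estimate is pointwise. It becomes uniform whenever $\sup_{\mathcal K}\|Z_0^{-1}\|$ and $\sup_{\mathcal K}\|\Delta Z\|$ are finite; boundedness of the omitted resolvent on the compact set supplies finiteness of the latter.

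For the source and output statement, the plan is to write the exact response as $(C+\Delta C)[Z_0+\Delta Z]^{-1}(S+\Delta S)$, using Eq.~\eqref{eq:effective_source} and its output analogue. I would expand this telescopically into the reduced term plus three types of error terms: $\Delta C$ times a bounded resolvent times $S$, $C$ times the resolvent error times $S$, and $C$ times a bounded resolvent times $\Delta S$, together with higher products. Each factor is bounded by the estimates above.

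The only real subtlety is bookkeeping rather than analysis. One must ensure that $\Delta Z$ genuinely includes the difference between the exact and approximate memory, and that the norm used is consistent across $Z_0$, $\Delta Z$ and the source and output vectors. Once those conventions are fixed, the argument is elementary linear algebra.
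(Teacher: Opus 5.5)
Your proof is correct and follows the paper's own argument: factor $Z_0+\Delta Z=Z_0(I+Z_0^{-1}\Delta Z)$, invert via the Neumann series, and bound the difference $[Z_0+\Delta Z]^{-1}-Z_0^{-1}=-[Z_0+\Delta Z]^{-1}\Delta Z\,Z_0^{-1}$ from the resolvent identity. Your telescoping treatment of the source and output corrections spells out what the paper states only as a one-line remark.
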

\begin{proof}
Factor $Z_0+\Delta Z=Z_0(I+Z_0^{-1}\Delta Z)$ and apply the convergent Neumann series. The resolvent identity gives the stated difference bound.
\end{proof}
For example, if $\|e^{t\cL_{QQ}}\|\leq C_Qe^{-\Lambda_Qt}$, then
\begin{equation}
 \|(s-\cL_{QQ})^{-1}\|\leq\frac{C_Q}{\Rea s+\Lambda_Q},\qquad
 \|\Sigma_Q(s)\|\leq\frac{C_Q\|\cL_{PQ}\|\|\cL_{QP}\|}{\Rea s+\Lambda_Q}.
\end{equation}
A spectral gap alone is insufficient for this estimate in a nonnormal sector; a semigroup or resolvent bound is needed. This bound concerns the specified response window; it does not establish accurate continuation to ill-conditioned poles.

\subsection{Microscopic quantities required to identify the reduced model}
To derive the closure from a material model, one must compute the susceptibility matrix, the projected Hamiltonian commutators, the projected bath action, and the omitted-sector memory. A coherence frequency is an energy gap divided by $\hbar$, but an effective coupling need not be a bare matrix element divided by $\hbar$ after metric normalization. In particular, symmetry can force that coupling to vanish.

In a normalized invariant subspace with basis perturbations $R_a$ and dual observables $X_a$ satisfying $\Tr(X_aR_b)=\delta_{ab}$, the matrix is
\begin{equation}\label{eq:projected_entries}
 A_{ab}=\Tr(X_a\cL_0R_b),\quad
 b_a=\Tr(X_aS_E),\quad c_a=\Tr(\hat J R_a).
\end{equation}
The three-state closure is exact for that microscopic model only if the subspace is invariant, omitted drive and output terms vanish or are accounted for, and these entries have the stated structure. Otherwise it is an approximation whose error must be assessed using the full response. Positive-real macroscopic dynamics do not alone prove that a specific GKSL generator realizes them.

An operator construction can test the electronic interpretation before fitting. In the equilibrium metric, form the normalized current $\hat\jmath_0$, its orthogonal reversible force $F=\mathcal Q_j\cL_H\hat\jmath_0$, and a second orthogonal force $R=\mathcal Q_{ju}\cL_H(F/\|F\|_0)$. Their norms define candidate coupling scales, while subsequent forces generate omitted memory. With band projectors $P_n$, the fraction $\|\sum_{n\ne m}P_nFP_m\|_0^2/\|F\|_0^2$ measures the interband content of the current force in that chosen decomposition. It does not by itself prove that the remaining hierarchy closes. Conserved sectors must be projected out consistently.

\section{A reservoir realization and its limited interpretation}\label{app:bath}
\subsection{Hamiltonian transport-mode model}
The reduced response has a quantum Hamiltonian realization through a structured harmonic reservoir. Oscillator reservoirs and quantum Langevin equations are established constructions \cite{caldeira1983,ford1988}. The spectral density below is chosen to produce the memory in \eqref{eq:M}. This is an existence construction for a quantum transport coordinate, rather than a derivation from a particular electronic band.

Let $[X,P]=\ii\hbar$, with transport-coordinate mass $M$ and charge $Q$. The bulk current is $J=Q\dot X/\mathcal V$, and $D=Q^2/(M\mathcal V)$. A collective drift coordinate with $Q=-Ne$ and $M=Nm$ gives $D=ne^2/m$, although this representation does not specify the remaining fermionic degrees of freedom. For neutral reservoir oscillators $[q_a,p_b]=\ii\hbar\delta_{ab}$, use a uniform length gauge:
\begin{equation}\label{eq:bath_H}
 H_{\mathrm{tot}}=\frac{P^2}{2M}+\frac{M\omega_0^2X^2}{2}-QXE(t)
 +\sum_a\left[\frac{p_a^2}{2m_a}
 +\frac{m_a\omega_a^2}{2}\left(q_a-\frac{c_aX}{m_a\omega_a^2}\right)^2\right].
\end{equation}
The square supplies the counterterm cancelling an unintended bath-induced static restoring force. A positive $\omega_0$ makes the finite-bath equilibrium normalizable; the free-transport response uses $\omega_0\to0$ at fixed nonzero $s$. Velocity-gauge coupling uses $P\mapsto P-QA$ with its contact current. This uniform channel model does not derive the finite-wave-vector compressibility closure.

\subsection{Exact reservoir elimination}

The Heisenberg solution of each reservoir oscillator is
\begin{align}
 q_a(t)&=q_a(0)\cos\omega_at+\frac{p_a(0)}{m_a\omega_a}\sin\omega_at\nonumber\\
 &\quad+\frac{c_a}{m_a\omega_a}\int_0^t\sin[\omega_a(t-t')]X(t')\dd t'.
\end{align}
Integration by parts gives the exact quantum Langevin equation
\begin{equation}\label{eq:QLangevin}
 M\ddot X+M\omega_0^2X+\int_0^t\mu(t-t')\dot X(t')\dd t'
 =QE(t)+F(t)-\mu(t)X(0),
\end{equation}
with
\begin{align}
 \mu(t)&=\sum_a\frac{c_a^2}{m_a\omega_a^2}\cos\omega_at,\\
 F(t)&=\sum_ac_a\left[q_a(0)\cos\omega_at+
 \frac{p_a(0)}{m_a\omega_a}\sin\omega_at\right].
\end{align}
The last term is the initial slip. For a zero-mean preparation with $\langle X(0)\rangle=\langle F(t)\rangle=0$, it does not affect the mean driven response. For arbitrary preparation it must be retained. Thermal stationary response can instead be formulated from the remote past, with a correlated equilibrium preparation.

Define the positive reservoir spectral density
\begin{equation}\label{eq:bath_spectral}
 \mathcal I(\nu)=\frac\pi2\sum_a\frac{c_a^2}{m_a\omega_a}\delta(\nu-\omega_a),\quad \nu>0.
\end{equation}
The friction transform is
\begin{equation}\label{eq:bath_transform}
 \frac{\widetilde\mu(s)}M=\frac{2}{\pi M}\int_0^\infty
 \frac{\mathcal I(\nu)}\nu\frac{s}{s^2+\nu^2}\dd\nu.
\end{equation}
The elimination is exact for the Hamiltonian. It does not assume a time-local GKSL equation for the reduced coordinate state.

\subsection{A positive spectral density for the proposed memory}

Choose
\begin{equation}\label{eq:structured_spectral}
 \frac{\mathcal I_c(\nu)}{M\nu}=\frac{g^2\gc}{2}\left[
 \frac1{\gc^2+(\nu-\Omega)^2}+\frac1{\gc^2+(\nu+\Omega)^2}\right]\geq0.
\end{equation}
The cosine transform of this even Lorentzian pair gives
\begin{equation}
 \mu_c(t)/M=g^2e^{-\gc t}\cos\Omega t,\qquad
 \widetilde\mu_c(s)/M=M_c(s).
\end{equation}
An ideal Ohmic part $\mathcal I_m(\nu)=M\gm\nu$ gives $\widetilde\mu_m/M=\gm$. A finite Drude cutoff instead gives
\begin{equation}\label{eq:cutoff}
 \mathcal I_{m,\Lambda}(\nu)=M\gm\nu\frac{\Lambda^2}{\nu^2+\Lambda^2},\qquad
 \widetilde\mu_{m,\Lambda}(s)/M=\frac{\gm\Lambda}{s+\Lambda}.
\end{equation}
The regulated quantum mean response is therefore
\begin{equation}\label{eq:regulated_sigma}
 \sigma_{\Lambda,\omega_0}(s)=
 \frac{D}{s+\omega_0^2/s+\gm\Lambda/(s+\Lambda)+M_c(s)}.
\end{equation}
It converges to \eqref{eq:sigma} as $\omega_0\to0$, $\Lambda\to\infty$ at fixed $s$ in the right half-plane. The inverse-response corrections are $\omega_0^2/s$ and $-\gm s/(s+\Lambda)$; both are controlled on compact response windows away from the static singularity and cutoff pole.

An ideal Ohmic spectrum has an ultraviolet-divergent bare counterterm, requiring a regulated limit. Broadband fluctuations and arbitrarily short times must retain the cutoff or specify renormalization. A finite cutoff changes the reduced response outside the stated limiting window.

The parameter $\gc$ in this construction is a reservoir correlation linewidth, not a derived electronic pure-dephasing rate. A thermal quantum reservoir also fixes quantum noise, whereas the same deterministic mean response can be obtained from classical coordinates. The existence of this realization therefore establishes compatibility with a quantum Hamiltonian, not uniqueness of an electronic-coherence mechanism.

\section{Passivity and electromagnetic determinant checks}\label{app:det}
For $A=K-\Gamma$, $K^\dagger=-K$, $\Gamma=\diag(\gm,\gc,\gc)>0$, and $x=(sI-A)^{-1}b$,
\begin{equation}
 \Rea[b^\dagger(sI-A)^{-1}b]
 =(\Rea s)\|x\|^2+x^\dagger\Gamma x\geq0.
\end{equation}
Every material eigenvalue satisfies $\Rea s_n\leq-\min(\gm,\gc)$. Retardation and analyticity are properties of this stable response; no separate relativistic-causality claim follows.
The characteristic matrix is
\begin{equation}
 sI-A=\begin{pmatrix}s+\gm&g&0\\-g&s+\gc&\Omega\\0&-\Omega&s+\gc\end{pmatrix}.
\end{equation}
Expansion by the first row gives $P=(s+\gm)[(s+\gc)^2+\Omega^2]+g^2(s+\gc)$. The $(1,1)$ cofactor is $N$, hence $(sI-A)^{-1}_{11}=N/P$ without any elimination assumption about nonzero $s+\gc$. This cofactor formula remains valid by analytic continuation wherever the resolvent exists, including where an intermediate step dividing by $s+\gc$ would be inappropriate.

For a longitudinal mode with Gauss's law imposed, the state matrix at $k>0$ is
\begin{equation}\label{eq:AL}
 A_L(k)=\begin{pmatrix}
 0&-\ii c_hk&0&0\\
 -\ii\left(c_hk+\frac{\omega_p^2}{c_hk}\right)&-\gm&-g&0\\
 0&g&-\gc&-\Omega\\
 0&0&\Omega&-\gc
 \end{pmatrix}.
\end{equation}
Its non-Euclidean density metric is the physical one inherited from the electrostatic field energy:
\begin{equation}
 H_L=\diag\left(1+\frac{\omega_p^2}{c_h^2k^2},1,1,1\right).
\end{equation}
Indeed $A_L^\dagger H_L+H_LA_L=-2\diag(0,\gm,\gc,\gc)$. Thus the $1/k$ entry created by Gauss elimination is not an instability; it reflects the nonlocal electrostatic energy. Direct expansion gives $\det(sI-A_L)=\mathcal P_L$.

For one transverse polarization use normalized fields $e_T=\sqrt{\epsb}E_T$ and $h_T=\sqrt{\mu_b}H_T$. A consistent orientation gives
\begin{equation}\label{eq:AT}
 A_T(k)=\begin{pmatrix}
 0&-\ii v_bk&-\omega_p&0&0\\
 -\ii v_bk&0&0&0&0\\
 \omega_p&0&-\gm&-g&0\\
 0&0&g&-\gc&-\Omega\\
 0&0&0&\Omega&-\gc
 \end{pmatrix}.
\end{equation}
Its Hermitian part is $-\diag(0,0,\gm,\gc,\gc)$, and $\det(sI-A_T)=\mathcal P_T$. A change of polarization orientation changes signs in paired off-diagonal entries but not the determinant or the energy estimate.

\section{Optional repeated-pole limit}\label{app:EP}
The material matrix can have a third-order exceptional point, but this special tuning is not needed for any conductivity or Maxwell result in the main text. For $\delta=\gm-\gc\ne0$, coefficient comparison in $P(s)$ gives
\begin{equation}\label{eq:EP3}
 \Omega^2=\delta^2/27,\qquad g^2=8\delta^2/27,\qquad
 s_*=-(\gm+2\gc)/3.
\end{equation}
Both couplings are nonzero. The first two rows of the eigenvalue equation then determine the complete eigenvector from its current component, so the eigenspace is one-dimensional and the triple root has a size-three Jordan block. Its current impulse response is
\begin{equation}
 h_*(t)=D e^{s_*t}\left(1-2\delta t/3+2\delta^2t^2/27\right),\qquad t\geq0.
\end{equation}
This is a passive finite-state property, not a uniquely quantum signature. Moreover $\mathcal P_T(k,s_*)=\omega_p^2s_*N(s_*)\ne0$ and $\mathcal P_L(k,s_*)=(c_h^2k^2+\omega_p^2)N(s_*)\ne0$: a material exceptional point is not an exceptional point of the coupled Maxwell system. Within the conditional orbital map it additionally requires $g^2=8\Omega_0^2/9$ and $|\delta|=\sqrt3\Omega_0$, so independent tuning of the phenomenological coefficients is not justified. No realization of this tuning in a material is established here.


\end{document}